
\documentclass[onecolumn,draftcls,12pt]{IEEEtran}
\usepackage{amsfonts}
\usepackage{latexsym,bm}
\usepackage{amsmath,amssymb,amsfonts}
\usepackage{indentfirst}
\usepackage{cite}
\usepackage{graphics}
\usepackage{graphicx,color,psfrag,overpic}
\usepackage{stfloats}
\usepackage{cases}
\usepackage{array}
\usepackage{verbatim}
\usepackage{setspace}
\usepackage{fancyhdr}

\usepackage{amsthm}

\newtheorem{theorem}{Theorem}

\def\bG{{\bf G}}
\def\bR{{\bf R}}
\def\bn{{\bf n}}
 \hyphenpenalty=5000
\hbadness=1000
\begin{document}
\title{Spectral Efficiency of Mixed-ADC Receivers for Massive MIMO Systems}
\vspace{2mm}
\author{\IEEEauthorblockN{Weiqiang Tan,$~\IEEEmembership{Student Member,~IEEE,}$ Shi Jin,$~\IEEEmembership{Member,~IEEE}$, \\Chao-Kai Wen, $\IEEEmembership{Member,~IEEE}$ and Yindi Jing,$~\IEEEmembership{Member,~IEEE}$}
\thanks{W. Tan and S. Jin are with the National Communications Research Laboratory, Southeast University, Nanjing 210096, P. R. China (email: \{wqtan, jinshi\}@seu.edu.cn).}
\thanks{C. Wen is with Institute of Communications Engineering, National Sun Yat-sen University, Kaohsiung 804, Taiwan (email: chaokai.wen@mail.nsysu.edu.tw).}
\thanks{Y. Jing is with the Department of Electrical and Computer Engineering,
University of Alberta, Edmonton, AB, T6G 2V4 Canada (email: yindi@ualberta.ca).}}

\maketitle
\vspace{-5mm}
\begin{abstract}
This paper investigated the uplink of multi-user massive multi-input multi-output (MIMO) systems with a mixed analog-to-digital converter (ADC) receiver
architecture, in which some antennas are equipped with costly full-resolution ADCs and others with less expensive low-resolution ADCs. A closed-form approximation of the achievable spectral efficiency (SE) with the maximum-ratio combining (MRC) detector is derived. Based on this approximated result, the effects of the number of base station antennas, the transmit power, the proportion of full-resolution ADCs in the mixed-ADC structure, and the number of quantization bits of the low-resolution
ADCs are revealed. Results showcase that the achievable SE increases
with the number of BS antennas and quantization bits, and it converges to a saturated value in the high user power regime or the full ADC resolution case. Most important, this work efficiency verifies that for massive MIMO, the mixed-ADC receiver with a small fraction of full-resolution ADCs can have comparable SE performance
with the receiver with all full-resolution ADCs but at a considerably lower hardware cost.
\end{abstract}
\begin{IEEEkeywords}
Massive MIMO, low-resolution ADC, mixed-ADC receiver, spectral efficiency, maximum-ratio combining detector.
\end{IEEEkeywords}
\section{Introduction}
Massive multi-input multi-output (MIMO) systems, in which the base station (BS) is equipped with a large number of antennas, have emerged as a promising technology
for 5G systems because of their significant advantages in energy efficiency and spectral efficiency (SE). However, the huge hardware cost of the processing units
associated with the antennas is a major issue in their practical implementation. This motives the use of low-resolution analog-to-digital converters
(ADCs) with favorable low cost and low power consumption, especially when the number of BS antennas is large or the sampling rate is high (e.g., in the GHz range)
\cite{RefLiang,Fan:Lett2015,Bai:2013}.

As recently reported in \cite{Mo:2014}, a massive MIMO receiver with pure low-resolution ADCs suffers from considerable rate loss in the high-SNR regime. When all
ADCs have low resolution, time-frequency synchronization and channel estimation are challenging and require excessive overhead \cite{RefLiang},\cite{Zhang:2016}.
In \cite{RefLiang}, a mixed-ADC receiver architecture was proposed for massive MIMO systems, in which a fraction of the ADCs has full-resolution to promote
system performance, and the others have low-resolution in consideration of the hardware cost and energy consumption. By analyzing the general mutual information \cite{Zhang:2015},
authors demonstrated that such a mixed-ADC architecture can achieve a significant fraction of the channel capacity with a small number of
full-resolution ADCs \cite{RefLiang}. In \cite{Zhang:2016}, the mixed-ADC receiver architecture was developed for multiuser massive MIMO systems, in which a family of Bayes detectors were developed by conducting Bayesian estimate of the user signals with GAMP algorithm. As massive MIMO is a newborn technology, studying on the achievable spectral efficiency (SE) in massive MIMO with a mixed ADC
receiver architecture is of special interest.

In this paper, we aim to analyze the uplink achievable SE of multi-user  massive MIMO systems with a mixed-ADC receiver architecture. We adopt the low-complexity
maximum-ratio combining (MRC) detection and derive a closed-form approximated expression of the achievable SE. The result reveals the effects of important system
parameters, including the number of BS antennas, the user transmit power, the number of quantization bits of the low-resolution ADCs, and the proportion of
full-resolution ADCs in the mixed structure, on the achievable SE. Simulations show that the derived approximation is tight with the transmitter power $p_u < 10$ dB or
when the number of quantization bits is moderate to high. Our work also shows that the mixed-ADC massive MIMO receiver with a small fraction of
full-resolution ADCs can have comparable SE performance with the receiver with all full-resolution ADCs.

\section{System model}
We consider a single-cell multi-user massive MIMO uplink with $N$ users and $M$ ($M\!\gg \!N$) receiver antennas at the BS. A mixed-ADC architecture is used at the
receiver to save on hardware cost. The receiver has  $M$ antenna elements. However, only $M_0$ antennas are connected to the costly full-resolution ADCs, and the
remaining $M_1$ (where $M_1\!=\! M\!-\!M_0$) antennas are connected to the less expensive low-resolution ADCs. Define $\kappa\triangleq M_0/M $ (0$ \leq \kappa \leq 1$),
which is the proportion of the full-resolution ADCs in the mixed-ADC architecture.

Let $\bG$ be the $M\times N$ channel matrix from the users to the BS. The channel matrix is modeled as
\begin{equation}\label{eq:one02}
{\bf{G}} = {\bf{H}}{{\bf{D}}^{{1 \mathord{\left/
 {\vphantom {1 2}} \right.
 \kern-\nulldelimiterspace} 2}}},
\end{equation}
where ${\bf{H}}\! \in\! \mathbb{C}^{M \!\times \!N}$ contains the fast-fading coefficients, whose entries that are independent and identically distributed (i.i.d.)
complex Gaussian random variables with zero-mean and unit-variance denoted $\mathcal{C N}(0,1)$, and ${\bf{D}}$ is an $N \!\times\! N$ diagonal matrix with diagonal
elements given by ${[\bf{D}]}_{nn}\! =\! {\beta _n}$. Here, ${\beta _n} =  {{{z_n}}}{r_{n}^{-\gamma }}$ models both path loss and shadowing, where $r_n$ is the
distance from the  $n$th user to the BS,  $\gamma$ is the decay exponent, and $z_n$ is a log-normal random variable. For ease of expression, we denote $\bG=[\bG_0\
\bG_1]^T$, where $\bG_0$ is the $M_0 \times N $ channel matrix from the users to the $M_0$ BS antennas with full-resolution ADCs, and $\bG_1$ is the $M_1\!\times\! N$
channel matrix from the users to the $M_1$ BS antennas with low-resolution ADCs.

Let $p_u$ be the average transmitted power of each user and $\bf{x}$ be the $N \times 1$ vector of information symbols. The received signals at the full-resolution ADCs can be given by
\begin{equation}\label{ttt:one01}
{\bf y}_0 = \sqrt {{p_u}} {\bf G}_0 {\bf x} + {\bf n}_0,
\end{equation}
where ${{\bf{n}}_0 \sim \mathcal{C N}\left( {0, {\bf{I}}} \right)}$ is the additive white Gaussian noise (AWGN). The received signals at low-resolution ADCs are
quantized each with a $b$-bit scalar quantizer. In general, the quantization operation is nonlinear, and the quantization error is correlated with the input
signal. For a tractable analysis, we utilize the additive quantization noise model \cite{Bai:2013} \cite{Zhang:2015}  that enables standard linear processing and Gaussian decoding at
the decoder. This approximation is widely used in quantized MIMO systems and shows to have good accuracy\cite{Fan:Lett2015,Bai:2013}. The received signals through
low-resolution ADCs can be approximated by
\begin{equation}\label{ttt:one02}
{{\bf y}_1} = Q\left( { {{\bf{\tilde{y}}}_1}} \right) \approx  \alpha {{\bf{\tilde{y}}}_1} + {\bf{n}}_{\sf q}=\sqrt {{p_u}} \alpha {{\bf{G}}_1}{\bf{x}} + \alpha {{\bf{n}}_1} + {\bf n}_{\sf q},
\end{equation}
where ${{\bf{n}}_1 \sim \mathcal{C N}\left( {0, {\bf{I}}} \right)}$ is the AWGN, ${{\bf{n}}_{\sf q}}$ is the additive Gaussian quantization noise vector that is
uncorrelated with ${{\bf{y}}_1}$, and $\alpha$ is a coefficient. For the non-uniform scalar minimum-mean-square-error (MMSE) quantizer of a Gaussian random
variable, the values of $\alpha$ are listed in Table \ref{Table} for $b \leq 5$. For $b > 5$, the values of $\alpha$ can be approximated by $1 - {\textstyle{{\pi \sqrt 3 } \over 2}} {2^{{\rm{ - }}2b}}$  \cite{Fan:Lett2015}. Note that $\alpha=1 $ as $b \rightarrow \infty$. From
\eqref{ttt:one01} and \eqref{ttt:one02}, the overall received signals at the BS can be expressed as
\begin{equation}\label{eq:one04}
{\bf{y}} = \left[ \begin{gathered}
  {{\bf{y}}_0} \hfill \\
  {{\bf{y}}_1} \hfill \\
\end{gathered}  \right] \approx \left[ \begin{gathered}
  \sqrt {{p_u}} {{\bf{G}}_0}{\bf{x}} + {\bf{n}}_0 \hfill \\
  \sqrt {{p_u}} \alpha {{\bf{G}}_1}{\bf{x}} + \alpha {\bf{n}}_1 + {\bf{n}}_{\sf q} \hfill \\
\end{gathered}  \right].
\end{equation}

We assume that the BS has perfect channel state information (CSI), which can potentially be obtained, e.g., through exploiting uplink channel feedback in frequency division duplexing systems or channel reciprocity in time division duplex systems. Furthermore, the BS adopts the MRC linear detector, which has significantly lower computational complexity than
other linear detectors such as the zero-forcing and linear MMSE estimators. By using \eqref{eq:one04}, the received signal vector after the MRC combination is
given by
\begin{eqnarray}
&&\hspace{0mm}{\bf{r}} = {{\bf{G}}^H}{\bf{y}} = {\left[ \begin{gathered}
  {{\bf{G}}_0} \hfill \\
  {{\bf{G}}_1} \hfill \nonumber \\
\end{gathered}  \right]^H}\left[ \begin{gathered}
  \sqrt {{p_u}} {{\bf{G}}_0}{\bf{x}} + {{\bf{n}}_{{0}}} \hfill \\
  \sqrt {{p_u}} \alpha {{\bf{G}}_1}{\bf{x}} + \alpha {{\bf{n}}_{{1}}} + {{\bf{n}}_{{\sf q}}} \hfill \\
\end{gathered}  \right],
\end{eqnarray}
which can be further expressed as
\begin{equation}\label{eq:one09}
\!{\bf{r}}\! = \!\sqrt {{p_u}} \left( {{\bf{G}}_0^H{{\bf{G}}_0}\! + \alpha {\bf{G}}_1^H{{\bf{G}}_1}} \right){\bf{x}} \!+\! \left( \!{{\bf{G}}_0^H{{\bf n}_0} \!+ \!\alpha {\bf{G}}_1^H{{\bf n}_1}} \!\right)\! +  {\bf{G}}_1^H{{\bf{n}}_{\sf q}}.
\end{equation}
\begin{table}
\centering
\caption{{$\alpha$ for different ADCs quantization bits ($b \leq 5$).   }}
\begin{tabular}{|c|c|c|c|c|c|c|c|c|c|c|} \hline
    $ {b}  $ & $1$ &$2$ & $3$ & $4$  & $5$ \\
    \hline
    $ { \alpha}$ & $0.6366$ & $0.8825$ & $0.96546$ & $0.990503$  & $0.997501$\\
    \hline
\end{tabular}
\label{Table}
\end{table}
From \eqref{eq:one09}, the $n$th element of ${\bf{r}}$ can be expressed as
\begin{equation}\label{eq:one10}
{r_n} = \sqrt {{p_u}} \left( {{\bf{g}}_{n0}^H{{\bf{g}}_{n0}} \!+\! \alpha {\bf{g}}_{n1}^H{{\bf{g}}_{n1}}} \right){x_n} + \sqrt {{p_u}} \sum\limits_{i = 1,i \ne n}^N \!\! \left( {{\bf{g}}_{n0}^H{{\bf{g}}_{i0}} + \alpha {\bf{g}}_{n1}^H{{\bf{g}}_{i1}}} \right){x_i} + \left( {{\bf{g}}_{n0}^H{{\bf{n}}_0} + \alpha {\bf{g}}_{n1}^H}{{\bf{n}}_1} \right) + {\bf{g}}_{n1}^H{{\bf{n}}_{\sf q}},
\end{equation}
where ${{\bf{g}}_{n0}}$  and ${{\bf{g}}_{n1}}$ are the $n$th columns of ${\bf{G}}_{0}$ and ${\bf{G}}_{1}$, respectively.

\section{Achievable SE Analysis}
In this section, we derive an closed-form approximated expression for the achievable SE of the multi-user massive MIMO uplink with the mixed-ADC receiver architecture. Based on this derived SE result, the behaviors of the system SE with respect to several parameters are revealed including the number of BS antennas, the user transmit power, the number of quantization bits of the low-resolution ADCs, and the ratio of full-resolution ADCs in the mixed-ADC architecture.

According to \eqref{eq:one10}, with straightforward calculations, the achievable SE of the $n$th user can be expressed by
\begin{equation}\label{eq:one11}
\!\!{R_n} \hspace{-0.5mm}= \hspace{-0.5mm}{\mathbb E} \left\{ {{{\log }_2}\left( \!\!{1  \!\!+  \! \! \frac{{{p_u}{ {\left| {{\bf{g}}_{n0}^H{{\bf{g}}_{n0}}} \!+ \!\hspace{-1mm}{\alpha}{{\bf{g}}_{n1}^H{{\bf{g}}_{n1}}} \right|}^2}}}{{ {I_n} \!+ \! {{\left| {{{\bf{g}}_{n0}^{H}}}{{\bf{n}}_0}  \hspace{-0.5mm}  \!+ \!\!{\alpha}  {{{\bf{g}}_{n1}^{H}}}{{\bf{n}}_1} \right|}}^2\!+ \!\left| {{\bf{g}}_{n1}^H{{\bf{n}}_{\sf q}}} \right|^2 }}} \!\right)}\! \right\},
\end{equation}
where $|\cdot|$ stands for the absolute value operation and
\begin{align}\label{eq:one12}
I_n =  {p_u}\! \sum\limits_{i = 1,i \ne n}^N \! {  {\left| {{\bf{g}}_{n0}^H{{\bf{g}}_{i0}}} + {\alpha} {{\bf{g}}_{n1}^H{{\bf{g}}_{1i}}} \right|}^2},
\end{align}
which is the interference power.

For a rigorous analysis of the achievable SE, the probability density function (pdf) of the SINR, which is the second term inside the log-function in \eqref{eq:one11}, is needed. This rigorous analysis is highly challenging, if not impossible, because of the complicated expression. Alternatively, we seek for a tight
approximation of the achievable SE by taking advantage of the large-scale of the massive MIMO syetem. Our main result on the achievable SE is presented in the following
theorem.
\begin{theorem}\label{theorem01}
For a multi-user massive MIMO uplink with MRC and a mixed-ADC receiver, under perfect CSI, the achievable SE of the $n$th user can be approximated as follows:
\setlength{\arraycolsep}{1pt}
\begin{eqnarray}\label{eq:one25}
{{\tilde R}_n} \!\triangleq  \!{\log _2}\hspace{-1mm}\left(1 + \!M\frac{\left[  \alpha  + (1-\alpha) \kappa \right]\beta _n }{\frac{1}{{{p_u}}} +  \sum\limits_{i = 1,i \ne n}^N  \beta _i + \frac{2\alpha \left( 1 - \alpha  \right)(1 - \kappa ) }{\alpha  + (1-\alpha) \kappa} \beta _n} \right)\hspace{-1mm}.
\end{eqnarray}
\setlength{\arraycolsep}{5pt}
\end{theorem}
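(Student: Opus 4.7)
The plan is to exploit the law-of-large-numbers concentration that governs massive MIMO. Since $M$ is large, both the numerator and denominator inside the logarithm of \eqref{eq:one11} are sums of $O(M)$ weakly-dependent contributions and so each concentrates tightly around its mean. Under this concentration, ${\mathbb E}\{\log_2(1+A/B)\}$ is well approximated by $\log_2(1+{\mathbb E}\{A\}/{\mathbb E}\{B\})$, and the remaining work reduces to evaluating four ensemble averages---signal, inter-user interference, thermal noise, and quantization noise---separately and combining them.

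The signal and classical noise terms follow from direct Gaussian computations. Since ${\bf g}_{n0}$ and ${\bf g}_{n1}$ are independent with i.i.d.\ $\mathcal{CN}(0,\beta_n)$ entries, $\|{\bf g}_{n0}\|^2$ and $\|{\bf g}_{n1}\|^2$ concentrate at $M_0\beta_n$ and $M_1\beta_n$, giving ${\mathbb E}\{|{\bf g}_{n0}^H{\bf g}_{n0}+\alpha{\bf g}_{n1}^H{\bf g}_{n1}|^2\}\sim M^2\beta_n^2[\alpha+(1-\alpha)\kappa]^2$ to leading order. For $i\ne n$, ${\bf g}_{i0}$ is independent of ${\bf g}_{n0}$ and ${\bf g}_{i1}$ is independent of ${\bf g}_{n1}$, so the cross-term between the two blocks vanishes in expectation and ${\mathbb E}\{|{\bf g}_{n0}^H{\bf g}_{i0}+\alpha{\bf g}_{n1}^H{\bf g}_{i1}|^2\}=M\beta_n\beta_i[\kappa+\alpha^2(1-\kappa)]$. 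An analogous calculation for the AWGN contribution yields $M\beta_n[\kappa+\alpha^2(1-\kappa)]$.

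The quantization-noise term is the main obstacle, and it is where the numerical factor $2$ in \eqref{eq:one25} originates. Invoking the AQNM/Bussgang decomposition, ${\bf n}_{\sf q}$ has zero mean and conditional covariance $\alpha(1-\alpha)\,\diag(p_u{\bf G}_1{\bf G}_1^H+{\bf I})$, so
\[
{\mathbb E}\{|{\bf g}_{n1}^H{\bf n}_{\sf q}|^2\}=\alpha(1-\alpha)\sum_{m=1}^{M_1}{\mathbb E}\Bigl\{|g_{n1,m}|^2\Bigl(p_u\sum_{i=1}^{N}|g_{i1,m}|^2+1\Bigr)\Bigr\}.
\]
The critical subtlety is that the $i=n$ summand produces $p_u{\mathbb E}\{|g_{n1,m}|^4\}=2p_u\beta_n^2$, with the $2$ coming from the complex-Gaussian fourth-moment identity, whereas for $i\ne n$ the summand factors as $p_u\beta_n\beta_i$. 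Summing over $m$ gives $\alpha(1-\alpha)(1-\kappa)M\beta_n\bigl\{p_u(\beta_n+\sum_{i=1}^N\beta_i)+1\bigr\}$; the $2\beta_n$ that eventually appears in the denominator of \eqref{eq:one25} arises by combining the extra $p_u\beta_n^2$ from the fourth-moment identity with the $i=n$ contribution already present in $\sum_i\beta_i$. To close the proof I sum the three denominator contributions, divide both numerator and denominator of the SINR by $p_uM\beta_n[\alpha+(1-\alpha)\kappa]$, and apply the algebraic identity $\kappa+\alpha^2(1-\kappa)+\alpha(1-\alpha)(1-\kappa)=\alpha+(1-\alpha)\kappa$, which collapses the three mismatched coefficients into a single common factor and reproduces \eqref{eq:one25} exactly.
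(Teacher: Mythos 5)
Your proposal is correct and follows essentially the same route as the paper: the same $\mathbb E\{\log_2(1+X/Y)\}\approx\log_2(1+\mathbb E\{X\}/\mathbb E\{Y\})$ approximation, the same Gaussian moment evaluations, and the same quantization-noise second moment $\alpha(1-\alpha)M_1(\beta_n+p_u\beta_n\sum_{i=1}^N\beta_i+p_u\beta_n^2)$, correctly identifying the complex fourth-moment factor of $2$ as the source of the $2\alpha(1-\alpha)(1-\kappa)\beta_n/[\alpha+(1-\alpha)\kappa]$ term. The only difference is that you derive this quantization term from the AQNM diagonal covariance directly, where the paper cites a prior result, and your closing identity $\kappa+\alpha^2(1-\kappa)+\alpha(1-\alpha)(1-\kappa)=\alpha+(1-\alpha)\kappa$ is exactly the "basic manipulation" the paper leaves implicit.
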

\begin{proof}
For a large but finite $M$, we can use the approximation \cite{Zhang:2014}: $\mathbb E\left\{ {{{\log }_2}\left( {1 + \frac{X}{Y}} \right)} \right\} \approx {\log
_2}\left( {1 + \frac{{\mathbb E\left\{ X \right\}}}{{\mathbb E\left\{ Y \right\}}}} \right)$ for independent random variables $X$ and $Y$ that converge to their
means with probability one as $M\rightarrow \infty$. Thus, $R_n$ can be approximated by
\begin{align}\label{eq:one15}
{\tilde R_n} \!= \!{\log _2}\hspace{-1mm} \left( \!{1\! + \!\frac{{{p_u}\left( {{\mathbb E}\! \left\{ {{{\left| {{\bf{g}}_{n0}^H{{\bf{g}}_{n0}}} \right|}^2}} \right\} + 2\alpha {\mathbb E}\left\{  {{\bf{g}}_{n0}^H{{\bf{g}}_{n0}}}  \right\}{\mathbb E} \!\left\{   {{\bf{g}}_{n1}^H{{\bf{g}}_{n1}}}   \right\}\! +\! {\alpha ^2}{\mathbb E} \left\{ {{{\left| {{\bf{g}}_{n1}^H{{\bf{g}}_{n1}}} \right|}^2}} \right\}} \right)}}{{{ \tilde  I_n} +   {\mathbb E} \left\{ {{{\left\| {{{\bf{g}}_{n0}}} \right\|}^2}} \right\} +  2\alpha {\mathbb E} \left\{ {\left\| {{{\bf{g}}_{n0}}} \right\|} \right\}{\mathbb E} \left\{ {\left\| {{{\bf{g}}_{n1}}} \right\|} \right\}  +   {\alpha ^2}{\mathbb E} \left\{ {{{{\left\| {{{\bf{g}}_{n1}}} \right\|}}^2}} \right\}+ {\mathbb E} \left\{ \left| {{\bf{g}}_{n1}^H{{\bf{n}}_{\sf q}}} \right|^2 \right\} }}}\! \right),
\end{align}
where $\|\cdot\|$ stands for Euclidean norm and
\begin{align}\label{eq:one15}
\tilde I_n  = {p_u}\! \sum\limits_{i = 1,i \ne n}^N \!{\left( {{\mathbb E} \left\{ {{{\left| {{\bf{g}}_{n0}^H{{\bf{g}}_{i0}}} \right|}^2}} \right\} + 2\alpha {\mathbb E}\left\{   {{\bf{g}}_{n0}^H{{\bf{g}}_{i0}}}   \right\}{\mathbb E} \left\{  {{\bf{g}}_{n1}^H{{\bf{g}}_{i1}}} \right\}\! + \!{\alpha ^2}{\mathbb E} \left\{ {{{\left| {{\bf{g}}_{n1}^H{{\bf{g}}_{i1}}} \right|}^2}} \right\}} \right)}.
\end{align}

Next, we calculate the expectation of each term. From the channel model in (\ref{eq:one02}), we have
\begin{equation}
{{\bf{g}}_{nj}} = \sqrt {{\beta _n}} {{\bf{h}}_{nj}},
\end{equation}
where the entries of ${\bf{h}}_{nj}$ are i.i.d. following $\mathcal{CN}(0,1)$ and the subscript $j$ equals to 0 or 1. Moreover, ${\bf{h}}_{nj}$ and ${\bf{h}}_{ij}$ are independent for $n \ne i$. Thus,
with straightforward calculations, we have ${\mathbb E} \left\{ {{{\left| {{\bf{g}}_{nj}^H{{\bf{g}}_{nj}}} \right|}^2}} \right\}\!\! =\!\! \beta _n^2\left( {M_j^2 + {M_j}} \right)$, ${\mathbb E}
\left\{ {{  {{\bf{g}}_{nj}^H{{\bf{g}}_{nj}}}  }} \right\} \!\!= \!\!{\beta _n}{M_j}$, ${\mathbb E} \left\{ {{  {{\bf{g}}_{nj}^H{{\bf{g}}_{ij}}}
 }} \right\} = 0 $, and ${\mathbb E} \left\{ {{{\left| {{\bf{g}}_{nj}^H{{\bf{g}}_{ij}}} \right|}^2}} \right\} = {\beta _n}{\beta _i}{M_j}$.
The
quantization noise term in \eqref{eq:one15} can be calculated as follows:
\begin{equation}\label{eq:two13}
{\mathbb E} {\left\{ \left| {{\bf{g}}_{n1}^H{{\bf{n}}_{\sf q}}} \right|^2 \right\}}
= {\mathbb E} {\left\{ \left| {\bf{g}}_{n1}^H {{\bf{R}}_{{{\bf{n}}_{{\sf q}}}{{\bf{n}}_{{\sf q}}}}}{\bf{g}}_{n1}\right| \right\}},
\end{equation}
where $\bR_{{\bn}_{\sf q}}$ is the covariance matrix of ${\bn}_{\sf q}$. Following the results in \cite{Fan:Lett2015}, we have
\begin{align} \label{eq:two14}
{\mathbb E} \left\{ \left| {{\bf{g}}_{n1}^H{{\bf{n}}_{\sf q}}} \right|^2 \right\}\!  = \!\alpha(1\!-\! \alpha) M_1{\left(\! {{\beta _n} \!+ \!{p_u}{\beta _n}\!\!\sum\limits_{i = 1}^N {{\beta _i}} \! +\! {p_u}\beta _n^2} \right)}.
\end{align}
Substituting the above calculation results into \eqref{eq:one15}, ${{\tilde R}_n}$ can be written as
{\small{\begin{align}\label{eq:one150505}
{{\tilde R}_n} = {\log _2}\left( {1 + \frac{{{p_u}\left( {\beta _n^2\left( {M_0^2 + {M_0}} \right) + 2\alpha \beta _n^2{M_0}{M_1} + {\alpha ^2}\beta _n^2\left[ {{{\left( {M - {M_0}} \right)}^2} + \left( {M - {M_0}} \right)} \right]} \right)}}{{{p_u}\sum\limits_{i = 1,i \ne n}^N {\left( {{\beta _n}{\beta _i}{M_0} + {\alpha ^2}{\beta _n}{\beta _i}{M_1}} \right)}  + {\beta _n}{M_0} + {\beta _n}{\alpha ^2}{M_1} + \alpha \left( {1 - \alpha } \right){M_1}\left( {{\beta _n} + {p_u}{\beta _n}\sum\limits_{i = 1}^N {{\beta _i}}  + {p_u}\beta _n^2} \right)}}} \right).
\end{align}}}
After some basic manipulations, ${{\tilde R}_n}$ can be simplified as the desired result.
\end{proof}


From  {\emph{Theorem}} 1, it is obvious that the achievable SE in \eqref{eq:one25} is a function of the total number of antennas $M$, user transmit power $p_u$, the proportion of the number of full-resolution ADCs in the mixed-ADC structure $\kappa$, and the number of quantization bits $b$. We see that he SE increases as the user transmit power and the total number of antennas regardless of the values of $\kappa$ and $\alpha$, which is a natural phenomenon consistent with the massive MIMO system. In addition, we observe that $b$ affects the SE through coefficient $\alpha$. The value of $\alpha$ increases as $b$ increases, which boosts the achievable SE. When $b \to \infty$, the value of $\alpha$ converges to one, in which case the quantization error brought by ADCs disappears, and $\kappa$ affects on both the signal power and the quantization noise power because it appears in both the numerator and the denominator of the SINR. A higher $\kappa$ not only increases the desired signal but also reduces the quantization noise effect, which improves the achievable SE.

We start by investigating the impact of the proportion of the full-resolution ADCs on the approximation achievable SE. By differentiating the SINR term in \eqref{eq:one25} with respect to $\kappa$ and after straightforward manipulations, we obtain ${{\partial{\text{SINR}}}}/{{\partial{\kappa}}}
> 0$. That is, the $\textrm{SINR}$ is a monotonically increasing function of $\kappa$. This implies that increasing the number of full-resolution ADCs
benefits the achievable SE. However, more full-resolution ADCs means higher hardware cost and power consumption. Our result in (\ref{eq:one25}) reveals the natural
tradeoff between performance and cost for a massive MIMO receiver. The following results are analyzed the two special cases of $\kappa = 1$ and $\kappa = 0$. For the case of $\kappa = 1$, ${{\tilde R}_n}$ in (\ref{eq:one25}) becomes
\begin{equation}\label{eq:two01}
\tilde R_n = \!{\log _2}\hspace{-1mm}\left( {1+  M\frac{{  {\beta _n}}}{{\frac{1}{{{p_u}}}  +  \!\sum\nolimits_{i = 1,i \ne n}^N   {{\beta _i}}\!}}} \right).
\end{equation}

When $\kappa=1$, all ADCs have a full resolution, which maximums hardware cost and power consumption of system. This case considered in most existing works on massive MIMO. Our SE result in \eqref{eq:two01} is consistent with the previous result in \cite{Zhang:2014} for a full-resolution ADC massive MIMO receiver.
Moreover, the achievable SE for the $\kappa=1$ case is the same as that in \eqref{eq:one25}
for $\alpha = 1$. Both indicate the case in which all BS antennas are equipped with full-resolution ADCs, thus having high hardware cost and power consumption. We now consider the case of $\kappa = 1$. From (\ref{eq:one25}), we obtain
\begin{equation}\label{eq:two02}
{\tilde R_n} \!= \!{\log _2}\hspace{-1mm} \left( \!\!{1 + M\frac{{  \alpha {\beta _n}}}{\frac{1}{{{p_u}}}  + {\! \sum\nolimits_{i = 1,i \ne n}^N
\!{{\beta _i}} + 2 \left( {1  - \alpha } \right){\beta _n}}}}\!\right).
\end{equation}

When $\kappa=0$, all ADCs have a low resolution, which brings loss of the achievable SE due to  the quantization error caused by low resolution ADCs. Furthermore, the result in \eqref{eq:two02} is consistent with the conclusion in \cite{Fan:Lett2015}. Now we consider the case of high transmitter power regime, ${{\tilde R}_n}$ becomes
\begin{equation}\label{eq:two02222}
\mathop {\lim }\limits_{p_u \to \infty} \!\!\!{\tilde R_n}\! = \!{\log _2}\hspace{-0.5mm}\left( \!{1 +M  \frac{{\left[  \alpha  + (1-\alpha) \kappa \right]\beta _n }}{{\sum\nolimits_{i = 1,i \ne n}^N {{\beta _i}}  + \frac{{2\alpha \left( {1 - \alpha } \right)\left( {1 - \kappa } \right)}}{{\kappa  + \alpha \left( {1 - \kappa } \right)}}{\beta _n}}}} \!\right).
\end{equation}

This shows that as the transmit power grows without bound, the SINR and achievable SE are limited and converge to finite values. The reason is that the system becomes interference-limited in the high transmit power regime. Both the signal power and the interference power increase as the user power increases, whereas the noise effects diminish.

\section{Numerical Results}
In this section, we provide simulation results on the total uplink achievable SE, which is defined as $R \!\!=\! \!\sum\nolimits_{n = 1}^N {{R_n}} $. The
cell radius in our simulation is set as $r_c \!\!=\!\!1000$m, the guard zone radius is $r_h\!=\!100$m, the decay exponent is  $\gamma \!\!=\!\!2.1$, and the shadow-fading
standard deviation is $\sigma_{\textrm{shad}}\!\!=\!\!4.9 ~\textrm{dB}$. We assume 10 users, which are uniformly distributed in a round cell. The coefficient of each user's
large-fading $\beta_i$ (i.e., $i=1,\dots,10$) is randomly generated as follows: $\{13.13,~6.49,~11.01,~4.87,~29.00,~8.69,~50.02,~96.00,~1.24, ~41.04 \} {\times
10^{-4}}$.

\begin{figure}
\centering
\includegraphics [width=120mm, height=90mm]{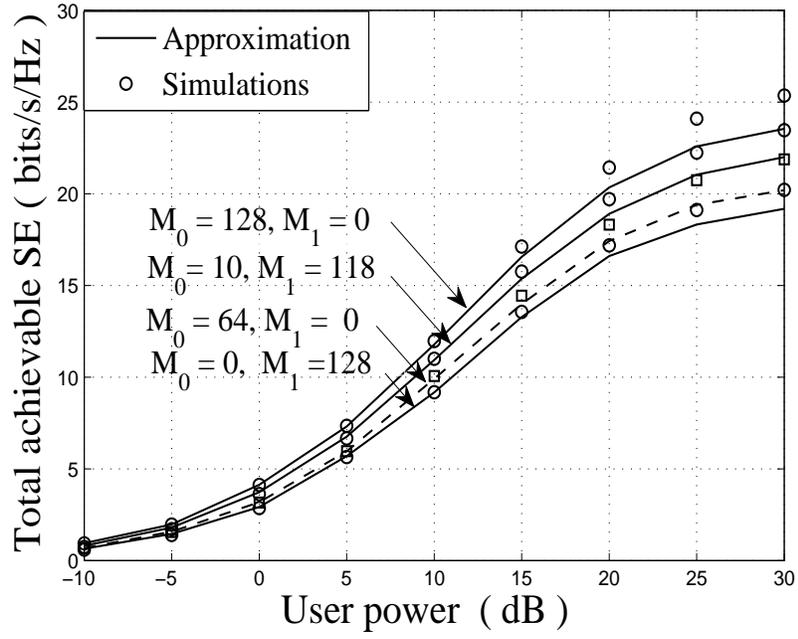}
\caption{Total achievable SE versus user power for $N= 10$ and $b = 4$.
Four cases are considered: 1) $M_0$ = 128, $M_1$ = 0; 2) $M_0$ = 10, $M_1$ = 118; 3) $M_0$ = 0, $M_1$ = 128; and 4) $M_0$ = 64, $M_1$ = 0.
}
\label{fig:fig001}
\end{figure}
\begin{figure}
\centering
\includegraphics [width=120mm, height=90mm]{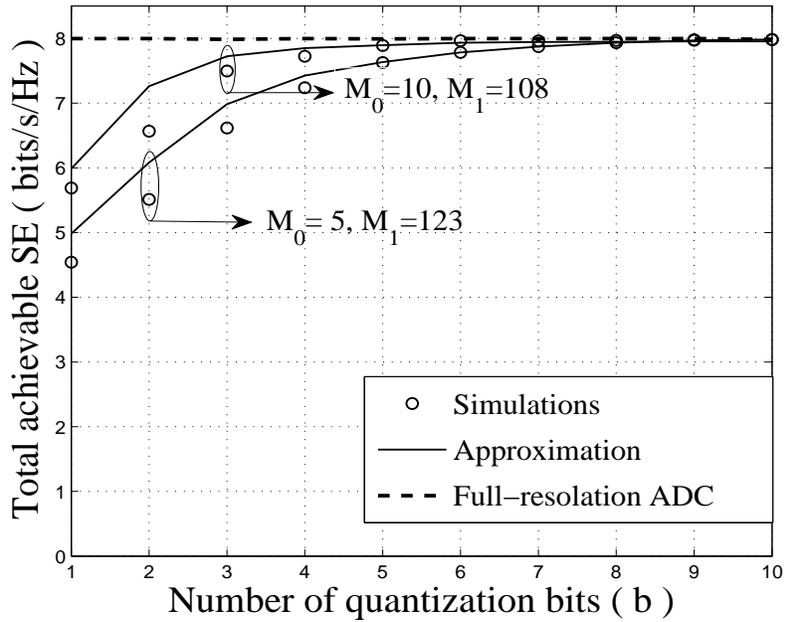}
\caption{Total uplink achievable SE versus the number of quantization bits for $N = 10$ and ${p_u} = 5$ dB.}
\label{fig:fig002}
\end{figure}


In Fig. \ref{fig:fig001}, the simulated total uplink achievable SE in \eqref{eq:one11} is compared with its approximation in \eqref{eq:one25}. Four cases are
considered: 1) $M_0$ = 128, $M_1$ = 0; 2) $M_0$ = 10, $M_1$ = 118; 3) $M_0$ = 0, $M_1$ = 128; and 4) $M_0$ = 64, $M_1$ = 0. We can see from the figure that the approximated formula is very tight
in the the transmitter power $p_u\! < \!10$ dB regime. The SE of all cases are close to each other SE in the low power regime, and increases with the user transmit
power. However, their gap becomes relatively obvious in the high power regime. As expected, the system with full-resolution ADCs has superior SE performance to
that with a mixed-ADC architecture. Compared with Case 1, the SE of Case 2 has only 4.5$\%$ loss at $p_u \!=\! 30$ dB, but Case 2 is economical as the receiver
only uses 10 full-resolution ADCs and the remaining 118 ADCs have low-resolution, thus greatly reducing the hardware cost. Case 3 has poor performance because it
adopts pure low-resolution ADCs. We choose to show the performance of Case 4 in this figure since it as a similar hardware cost as Case 2. The simulation shows that its achievable SE is relatively worse because the total number of BS
antennas has been reduced by half. Moreover, we discuss the energy consumption of different cases. With the energy consumption models $P_{\textrm{full}} = 0.43 M_0 $ Watt for full-resolution
ADCs and $P_{\textrm{low}} = {c_0}{2^b} M_1 + {c_1}$ for low-resolution ADCs,  where $b_{\textrm{max}}$ = 12, ${c_0 = 10^{-4} }$ Watt and ${ {c_1} = 0.02 }$ Watt \cite{Bai:2013}, the total
power consumptions of the four cases are 55.04, 4.4908, 0.2068, and 27.52 Watt, respectively. Comparing the power consumption between Case 1 and Case 2, we see a
great potential of the mixed-ADC architecture for effective power.

\begin{figure}
\centering
\includegraphics [width=120mm, height=90mm]{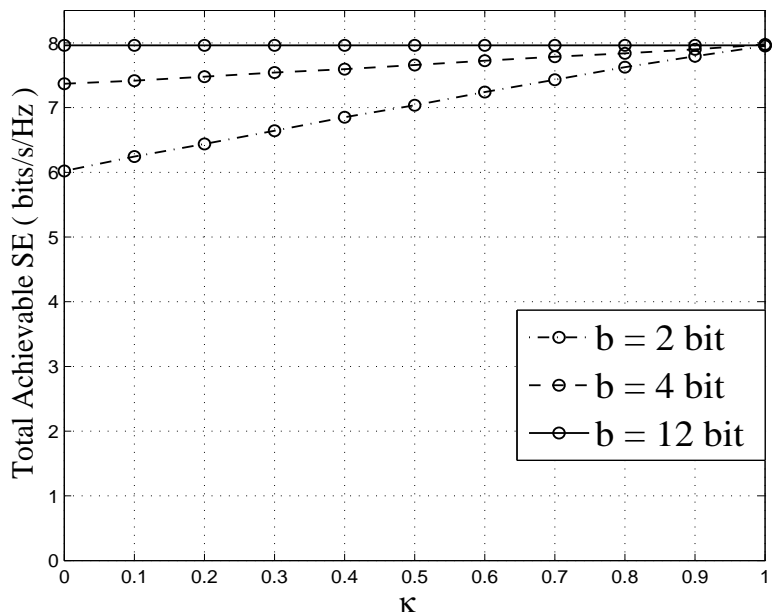}
\caption{Total uplink achievable SE versus $\kappa$ for fixed $M = 128$, $N = 10$ and ${p_u}$ = 5~dB.}
\label{fig:fig003}
\end{figure}

Fig.~\ref{fig:fig002} presents the approximated results for the total achievable SE and the numerical results for various numbers of quantization bits. The
approximation is relatively loose when the quantization bit of the low-resolution ADC is small, i.e., $b=2$. This is because the adopted additive quantization noise model is not tight for small $b$. Moreover, the figure shows that the achievable SE increases with the number of quantization bits and converge
to a limited SE with full-resolution ADCs.


Fig.~\ref{fig:fig003} depicts the approximated results for the total achievable SE against $\kappa$ when $b$ = 2, 4, and 12 bit, respectively. The total achievable
SE increases with $\kappa$. The increasing speed of the achievable SE is much faster for smaller $b$. This implies that the achievable SE can be
improved by simply increasing the number of antennas with full-resolution ADCs for small $b$. For large $b$, i.e., $b=12$, the total achievable SE remains constant as the
quantization error is negligible.

\section{Conclusion}
This paper investigated the uplink achievable SE of a massive MIMO mixed-ADC architecture and an MRC detector. Based on the additive quantization noise model for the ADC quantization, a closed-form approximation on the achievable SE was derived. Our results revealed that the approximated formula is very tight with the transmitter power $p_u < 10$ dB regimes. Moreover, the achievable SE increases
with the number of BS antennas and quantization bits, and it converges to a saturated value as the user power increases or the ADC resolution increases. We conclude that
the mixed-ADC structure can bring most of the desired performance enjoyed by massive MIMO receivers with full perfect-resolution ADCs, but with significantly reduced hardware cost.

\end{document}